\theoremstyle{definition}
\newtheorem{definition}{Definition}[section]
\newtheorem{theorem}{Theorem}[section]
\newtheorem{lemma}{Lemma}[section]
\newtheorem{proposition}{Proposition}[section]
\newtheorem{remark}{Remark}[section]
\newtheorem{corollary}{Corollary}[section]
\numberwithin{equation}{section}
\DeclareMathOperator{\Tr}{Tr}
\title{Non-trivial Area Operators Require Non-local Magic}
\author[1,2,3]{ChunJun Cao}
 \affiliation[1]{Joint Center for Quantum Information and Computer Science, University of Maryland, College Park, MD 20742, USA}
 \affiliation[2]{Institute for Quantum Information and Matter, California Institute of Technology, Pasadena, CA
91125, USA}
\affiliation[3]{Department of Physics, Virginia Tech, Blacksburg, VA, 24061, USA}
\emailAdd{cjcao@vt.edu}
\abstract{We show that no stabilizer codes over any local dimension can support a non-trivial area operator for any bipartition of the physical degrees of freedom even if certain code subalgebras contain non-trivial centers. This conclusion also extends to more general quantum codes whose logical operators satisfy certain factorization properties, including any complementary code that encodes qubits and supports transversal logical gates that form a nice unitary basis. These results support the observation that some desirable conditions for fault tolerance are in tension with emergent gravity and suggest that non-local ``magic'' would play an important role in reproducing features of gravitational back-reaction and the quantum extremal surface formula. We comment on conditions needed to circumvent the no-go result and examine some simple instances of non-stabilizer codes that do have non-trivial area operators.}
\begin{document} 
\maketitle

\section{Introduction}
The intriguing connection between the anti-de Sitter/conformal field theory (AdS/CFT) correspondence and quantum error correcting code (QECC) has led to numerous developments in quantum gravity and quantum information\cite{Almheiri:2014lwa}. A key property in AdS/CFT reproduced by QECC models is the Ryu-Takayanagi (RT) formula \cite{Ryu:2006bv,FLM,Hubeny:2007xt}, which is crucial in understanding how geometry and gravity can emerge from quantum entanglement\cite{Faulkner:2013ica,Faulkner:2017tkh}.  In a seminal work \cite{Harlow:2016vwg}, Harlow shows that all quantum erasure correction codes with encoding map $V:\mathcal{H}_L\rightarrow \mathcal{H}$ satisfying complementary recovery produce an RT-like relation such that for any logical state $\rho\in L(\mathcal{H}_L)$, a finite dimensional von Neumann algebra $M_L$ on $\mathcal{H}_L$ and a factorization $\mathcal{H}=\mathcal{H}_A\otimes \mathcal{H}_{\bar{A}}$,
\begin{equation}
    S(\Tr_{\bar{A}} [V\rho V^{\dagger}]) = S(M_L; \rho) +\Tr[\mathcal{L}_A\rho],
    \label{eqn:RT}
\end{equation}
where $\mathcal{L}_A\in L(\mathcal{H}_L)$ is defined to be the \emph{area operator}. 
Analogous to \cite{FLM},  $S(\Tr_{\bar{A}} [V\rho V^{\dagger}])$ corresponds to the entropy of a CFT subregion $A$, $S(M_L; \rho)$ plays the role of the bulk entropy associated with the entanglement wedge of $A$ and $\Tr[\mathcal{L}_A\rho]$ corresponds to the area of the extremal surface  homologous to $A$. In particular, it is argued \cite{Engelhardt:2014gca,Almheiri:2014lwa,Harlow:2016vwg,Cao:2017hrv,Akers:2018fow,Dong:2018seb,Caonote} that in order to emerge effects analogous to gravity and to describe smooth geometries, $\mathcal{L}_A$ needs to be non-trivial, i.e., not proportional to the logical identity.
Although not all quantum codes have a natural geometric description, where the notion of ``area'' is more form than substance, it is crucial to understand what in quantum codes gives rise to non-trivial $\mathcal{L}_A$, and by extension, provides a necessary condition in emerging spacetime and gravity from quantum information. This is the question we tackle in our work.

Many popular bottom-up constructions of holographic codes\cite{Pastawski:2015qua,Harris:2018jfl,Jahn:2019nmz,steinberg2023holographic} have concentrated on the stabilizer formalism. There are several clear advantages to this approach, (1) the framework is mature and well-understood, (2) the performance of the resulting code and many of its properties are classically efficiently simulable, especially under ``semi-classical'' noise like Pauli error channels, (3) code properties can be engineered and fine-tuned, and (4) they are practically implementable on quantum devices compared to (Haar) random codes or other models based on random tensor networks (RTN)\cite{Hayden:2016cfa}. This makes holographic stabilizer codes prime candidates for building holographic toy models, for teasing out the connections between particular code properties and AdS/CFT, and for simulating quantum gravity toy models on near-term quantum devices.

It is known \cite{Harlow:2016vwg,Donnelly:2016qqt,Cao:2020ksw,Pollack:2021yij} that stabilizer codes can admit non-trivial centers in their code subalgebra, of which $\mathcal{L}_A$ is an element. However, this does not imply the area operator, which is an element of the center, is non-trivial. Common lore often points to the fact that stabilizer codes and RTNs have flat entanglement spectrum, and thus a trivial area operator. However, it should be noted that spectral (non-)flatness alone is not sufficient to show whether the area operators are trivial or not. For example, it is well-known that stabilizer codes with trivial area operators can produce non-flat entanglement spectrum if they encode non-stabilizer logical states\footnote{In fact, the encoded logical state doesn't even have to be entangled. See for example the 4-qubit Bacon-Shor code in Section 3.1 of \cite{Cao:2020ksw}.}. On the other hand, \cite{Cao:2020ksw,Pollack:2021yij} constructed examples of codes with non-trivial area operators but  the codewords retain a flat entanglement spectrum\footnote{See, for example, Sections 4.1 and 7.2 of \cite{Cao:2020ksw} and Section 5.2 of \cite{Pollack:2021yij}.}. Therefore, it is conceivable that although stabilizer codes do not reproduce certain features of holography such as correlation functions, these codes may yet have the capacity to accommodate non-trivial area operators, and thus some analogous form of gravity.  However, all of the examples that successfully constructed non-trivial area operators are, in fact, non-stabilizer codes \cite{Cao:2020ksw,Pollack:2021yij}. This led the authors of \cite{Pollack:2021yij} to question whether there are fundamental obstructions in stabilizer codes to having non-trivial $\mathcal{L}_A$. Here we formalize their conjecture and prove that no stabilizer code can have non-trivial area operators even if their code subalgebras have non-trivial centers. This no-go result also generalizes to local unitary deformations of stabilizer codes as well as classes of non-stabilizer codes with transversal logical gates that form a nice unitary basis.

These findings highlight the importance of long range non-stabilizerness\cite{NLmagic,White:2020zoz,Bao:2022mkc,Liu:2020yso} in understanding the emergence of gravitational features. It is well-known that when combined with Clifford operations, non-stabilizerness, or magic, is necessary and sufficient for universal quantum computation. In light of our no-go result,  this implies that any code that supports a non-trivial area operator must also contain some amount of ``non-local'' magic that cannot be removed simply from local unitaries. 
From the point of view of quantum gravity and its simulations, they also motivate code design beyond the Pauli stabilizer formalism and underscore the tension between gravity and fault tolerance.

We refer the readers to \cite{Harlow:2016vwg} and \cite{Pollack:2021yij} for details on the RT formula in QECCs. In this article, we only work with codes and von Neumann algebras on finite dimensional Hilbert spaces.  Note that the term ``stabilizer code'' commonly refers to a Pauli stabilizer code in literature where the code subspace is defined by the simultaneous $+1$ eigenspace of all elements in an abelian subgroup of the Pauli group over $n$ qudits with local dimension $q$. Here we adopt the same convention; as such, non-stabilizer codes in this work also include codes like \cite{XS,XP} which extend the Pauli stabilizer formalism\footnote{Strictly speaking, they are also stabilizer codes except the stabilizer group is no longer an abelian Pauli subgroup. However, their codewords in the computational basis need not be stabilizer states.}.

\section{No-go Theorems}
\begin{definition}\label{def:nicebasis}
    Consider the set of unitary operators over a finite dimensional Hilbert space $\mathcal{H}$, we say $\mathcal{E}$ is a unitary basis if it contains the identity operator and $\Tr[EF^{\dagger}]=0$ for $E,F\in \mathcal{E}, E\ne F$. Furthermore, $\mathcal{E}$ is a \emph{nice error basis} \cite{nice_err_basis1,nice_err_basis2} if
    $$EF=\omega(E,F)FE$$
    for all $E,F\in \mathcal{E}$, where $\omega(E,F)\in \mathbb{C}$ satisfies $\omega(E,F)^r=1$ for some $r$.
\end{definition}
For example, the set of Pauli operators (including identity) forms a nice error basis.

\begin{theorem}\label{thm:1}
Let $\mathcal{C}\subseteq \mathcal{H}$ be a stabilizer code with $M_{\mathcal{C}}$ the associated von Neumann algebra on $\mathcal{C}$. Any area operator ${\mathcal{L}}_{A}\in M_{\mathcal{C}}$ defined in this code must be proportional to the identity.
\end{theorem}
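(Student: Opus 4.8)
The plan is to show directly that $\mathcal{L}_A$ commutes with every logical Pauli operator of the code; since the logical Paulis span the full operator algebra $L(\mathcal{H}_L)$, this forces $\mathcal{L}_A$ to be proportional to the identity. I take the physical bipartition $\mathcal{H}=\mathcal{H}_A\otimes\mathcal{H}_{\bar A}$ to be a splitting of the $n$ physical qudits into two disjoint subsets, and I use the standard fact that a stabilizer code enjoys complementary recovery with respect to any such bipartition, so that by Harlow's theorem the RT relation~\eqref{eqn:RT} holds for \emph{every} logical density matrix $\rho$, with $M_{\mathcal{C}}$ the algebra of logical operators reconstructable on $A$.

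The single structural input I exploit is that in a stabilizer code every logical operator has a \emph{physical Pauli} representative $P$, and any physical Pauli factorizes across a qudit bipartition, $P=P_A\otimes P_{\bar A}$. Two consequences follow for a logical Pauli $\bar P$ with representative $P$ (so that $PV=V\bar P$). First, for any logical state $\rho$ one has $V\bar P\rho\bar P^\dagger V^\dagger=(P_A\otimes P_{\bar A})(V\rho V^\dagger)(P_A\otimes P_{\bar A})^\dagger$, and since partial tracing over $\bar A$ removes the $P_{\bar A}$ factor, $\Tr_{\bar A}[V\bar P\rho\bar P^\dagger V^\dagger]=P_A\,\Tr_{\bar A}[V\rho V^\dagger]\,P_A^\dagger$; hence the boundary entropy $S(\Tr_{\bar A}[V\rho V^\dagger])$ is invariant under $\rho\mapsto\bar P\rho\bar P^\dagger$. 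Second, conjugation by $\bar P$ is an automorphism of $M_{\mathcal{C}}$: if a logical operator is reconstructed by the region-$A$ physical operator $O_A\otimes 1_{\bar A}$, then after $\bar P$-conjugation it is reconstructed by $(P_A\otimes P_{\bar A})(O_A\otimes 1_{\bar A})(P_A\otimes P_{\bar A})^\dagger=(P_AO_AP_A^\dagger)\otimes 1_{\bar A}$, which is again a code-preserving region-$A$ operator. Because the algebraic entropy $S(M_{\mathcal{C}};\cdot)$ depends only on the pair (algebra, state), it is invariant under unitaries that preserve the algebra, so $S(M_{\mathcal{C}};\bar P\rho\bar P^\dagger)=S(M_{\mathcal{C}};\rho)$.

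Substituting $\rho$ and $\bar P\rho\bar P^\dagger$ into~\eqref{eqn:RT} and using that the first two terms are unchanged yields $\Tr[\mathcal{L}_A\rho]=\Tr[\mathcal{L}_A\,\bar P\rho\bar P^\dagger]=\Tr[\bar P^\dagger\mathcal{L}_A\bar P\,\rho]$ for every logical state $\rho$, hence $\mathcal{L}_A=\bar P^\dagger\mathcal{L}_A\bar P$, i.e. $[\mathcal{L}_A,\bar P]=0$. Letting $\bar P$ range over all logical Paulis, which form a basis of $L(\mathcal{H}_L)$, we conclude that $\mathcal{L}_A$ lies in the center of $L(\mathcal{H}_L)$ and is therefore a multiple of the logical identity.

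I expect the real work to be in the bookkeeping rather than the idea. The points that need care are: (i) establishing that~\eqref{eqn:RT} genuinely applies to all logical $\rho$ for an arbitrary stabilizer code, i.e. that complementary recovery holds with respect to every qudit bipartition; (ii) making precise that $\bar P$-conjugation is an automorphism of $M_{\mathcal{C}}$, which means translating carefully between $M_{\mathcal{C}}$ acting on $\mathcal{H}_L$ and the physical region-$A$ operators acting on $\mathcal{H}$ through the encoding isometry $V$; and (iii) verifying, for composite local dimension $q$, that the logical Pauli operators still form a (nice) unitary basis of $L(\mathcal{H}_L)$ in the sense of Definition~\ref{def:nicebasis}. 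A more hands-on alternative would use the decomposition $M_{\mathcal{C}}=\bigoplus_\alpha L(\mathcal{H}_{A_\alpha})\otimes 1_{\bar A_\alpha}$ and show that the constants appearing in $\mathcal{L}_A$ over different central blocks must coincide, by exhibiting logical Pauli operators whose conjugation action permutes these blocks transitively; the argument above avoids that symplectic bookkeeping.
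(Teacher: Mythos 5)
Your proof is correct, and it takes a genuinely different route from the paper. The paper works inside the explicit complementary-recovery decomposition $|\widetilde{\alpha,ij}\rangle = U_AU_{\bar A}(|\alpha,i\rangle|\alpha,j\rangle|\chi_\alpha\rangle)$: it uses the nice-error-basis lemmas of the appendix to build a simultaneous eigenbasis of $\mathcal{L}_A$ and a logical Pauli, tracks how a hypothetical non-commuting logical Pauli permutes $\alpha$-blocks, and invokes the fact that the recovery unitaries are Clifford so that the decoded representative is a factorized Pauli $P_1'\otimes P_2'$ preserving each $S(\chi_\alpha)$, yielding a contradiction. You instead never open up the $\alpha$-block structure: you take the RT identity~\eqref{eqn:RT} (valid for all $\rho$ once complementary recovery is granted, which is the same Lemma~C.1 input the paper uses) and show that both the boundary entropy and the algebraic bulk entropy are invariant under $\rho\mapsto\bar P\rho\bar P^\dagger$, using only that a physical Pauli representative factorizes as $P_A\otimes P_{\bar A}$ across any qudit cut and that $\bar P$-conjugation is an automorphism of the wedge algebra; this forces $[\mathcal{L}_A,\bar P]=0$ for every logical Pauli and hence $\mathcal{L}_A\propto \tilde I$. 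Your version buys real economy: it needs neither Lemmas~\ref{lemma:1} and \ref{lemma:2} nor the Clifford nature of $U_A,U_{\bar A}$ nor the explicit form $\mathcal{L}_A=\bigoplus_\alpha S(\chi_\alpha)\tilde I$, and it generalizes immediately to any code in which a spanning set of logical unitaries admits physical representatives factorizing across the cut (e.g.\ the transversal case of Corollary~\ref{coro:2}). What the paper's more hands-on route buys is visibility into \emph{where} the obstruction sits (the equality of the $S(\chi_\alpha)$ across blocks) and a form that extends verbatim to the weaker hypothesis of Theorem~\ref{thm:2}, where the decoded operator is only required to factorize as $O_{A_1\bar A_1}\otimes O_{A_2}\otimes O_{\bar A_2}$ — a condition your boundary-entropy step does not cover, since such an operator need not split as $O_A\otimes O_{\bar A}$ physically. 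Two points to tighten in a final write-up: justify explicitly that a unitary satisfying $\bar P M\bar P^\dagger=M$ also preserves $M'$ and the center, so it permutes isomorphic blocks and acts as product unitaries within them, which is what makes $S(M;\cdot)$ invariant; and keep the notation straight — the bulk entropy in~\eqref{eqn:RT} is computed with the wedge algebra on $A$ (what you verbally define), not the full logical algebra $M_{\mathcal{C}}$, while the final centrality argument does use that the logical Paulis span all of $M_{\mathcal{C}}=L(\mathcal{H}_L)$, including for composite local dimension, exactly as the paper asserts.
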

\begin{proof}
Consider any bipartition  of the physical degrees of freedom such that $\mathcal{H}=\mathcal{H}_A\otimes \mathcal{H}_{\bar{A}}$ and let $M_A$ be the code subalgebra supported on $A$. Let $\{|\widetilde{\alpha,ij}\rangle\}$ be a set of orthonormal basis  induced by $M_A$ in the Wedderburn decomposition\footnote{Note that the most general Wedderburn decomposition has $\mathcal{C}=(\bigoplus_{\alpha}\mathcal{H}_{a^{\alpha}}\otimes \mathcal{H}_{\bar{a}^{\alpha}})\oplus\mathcal{H}_0$. Here we drop $\mathcal{H}_0$ for simplicity as it does not affect the argument.}, which spans $\mathcal{C}=\bigoplus_{\alpha}\mathcal{H}_{a^{\alpha}}\otimes \mathcal{H}_{\bar{a}^{\alpha}}$ such that $\forall\tilde{O}_A\in M_A$, $\tilde{O}_A=\bigoplus_{\alpha}O_{a^{\alpha}}\otimes I_{\bar{a}^{\alpha}}$ , and for operators in the commutant $\forall \tilde{O}'_A\in M'_{{A}}$, $\tilde{O}'_A=\bigoplus_{\alpha}I_{{a}^{\alpha}}\otimes O_{\bar{a}^{\alpha}}$. 

Recall from \cite{Pollack:2021yij} Lemma C.1 that all stabilizer codes satisfy complementary recovery, that is, one can identify a decomposition $\mathcal{H}_A=\bigoplus_{\alpha}\mathcal{H}_{A_1^{\alpha}}\otimes\mathcal{H}_{A_2^{\alpha}}$,$\mathcal{H}_{\bar{A}}=\bigoplus_{\alpha}\mathcal{H}_{\bar{A}_1^{\alpha}}\otimes\mathcal{H}_{\bar{A}_2^{\alpha}}$
and unitary operators $U_A,U_{\bar{A}}$ such that
\begin{equation}
    |\widetilde{\alpha,ij}\rangle = U_AU_{\bar{A}} \Big(|\alpha,i\rangle_{A_1^{\alpha}}|\alpha,j\rangle_{\bar{A}_1^{\alpha}}|\chi_{\alpha}\rangle_{A_2^{\alpha}\bar{A}_2^{\alpha}}\Big)
\end{equation} 
for some state $|\chi_{\alpha}\rangle$ that captures the residual entanglement between $A$ and $\bar{A}$ which we would then associate with the area of the RT surface.
By the definition of complementary recovery\cite{Harlow:2016vwg}, we know that any element of $M_A'$ is supported on $\bar{A}$, i.e., $M_A'\subseteq M_{\bar{A}}$. Since any operator supported on $\bar{A}$ must commute with those on $A$, we have $M_{\bar{A}}\subseteq M_A'$. It follows that $M_A'=M_{\bar{A}}$.

Let $\chi_{\alpha}=\Tr_{\bar{A}_2^{\alpha}}[|\chi_{\alpha}\rangle\langle\chi_{\alpha}|]$, and the area operator is defined as $\mathcal{L}_A=\bigoplus_{\alpha}S(\chi_{\alpha})\tilde{I}_{a^{\alpha}\bar{a}^{\alpha}}\in Z(M_A)$.
If the center $Z(M_A)$ is trivial, i.e. contains only elements proportional to the logical identity, then so is the area operator. If the center is non-trivial, (e.g.\cite{Donnelly:2016qqt,Cao:2020ksw,Pollack:2021yij}) then 
it can be expressed in the block diagonal form with distinct values of $\alpha$. Each such block is often referred to as an $\alpha$-block in relevant literature in holography.  
We now show that the eigenvalues $S(\chi_{\alpha})$ must be identical for all $\alpha$'s in stabilizer codes.

Let $\tilde{P}$ be a logical Pauli operator such that $\tilde{P}\in M_{\mathcal{C}}$ and $[\tilde{P},\mathcal{L}_{A}]\ne 0$. Note that the logical Pauli operators form a complete operator basis for $M_{\mathcal{C}}$ in a stabilizer $[[n,k,d]]_q$ code over any local dimension $q$. If no such $\tilde{P}$ exists, then $\mathcal{L}_A\in Z(M_{\mathcal{C}})$, leading to a trivial area operator.

When such a $\tilde{P}$ can be found, we recall that the Pauli basis is a nice unitary basis. Hence by Lemma~\ref{lemma:2}, there must exist some logical Pauli operator $\tilde{Q}_i$ such that it and $\mathcal{L}_A$ can be simultaneously diagonalized. 
Let $\{|\tilde{\psi}_{\alpha}\rangle\}$ be such an eigenbasis over $\mathcal{C}$ for the Pauli operator $\tilde{Q}_i$ where each basis state lives in a particular $\alpha$ block.
For codes that are complementary recoverable, there exist recovery unitaries $U_A, U_{\bar{A}}$ such that, 
\begin{align}
    U_A U_{\bar{A}}\tilde{P}|\tilde{\psi}_{\alpha}\rangle= P' |\psi_{\alpha}\rangle_{A_1^{\alpha}\bar{A}_1^{\alpha}} \otimes |\chi_{\alpha}\rangle_{A_2^{\alpha} \bar{A}_2^{\alpha}},
\end{align}
where $P'$ is a Pauli operator. Here we have used the fact that $U_A,U_{\bar{A}}$ are Clifford unitaries, and hence maps Pauli operators to Pauli operators. We will drop the $A_i^{\alpha}\bar{A}_i^{\alpha}$ labels to reduce clutter.

 Because each state $|\tilde{\psi}_{\alpha}\rangle$ is an eigenstate of a Pauli operator, by Lemma \ref{lemma:1}, we know that $\tilde{P}|\tilde{\psi}_{\alpha}\rangle=|\tilde{\psi}'_{\alpha'}\rangle$ where $|\tilde{\psi}'_{\alpha'}\rangle$ is an(other) eigenstate of $\mathcal{L}_A$. Therefore, we must have 
 
 \begin{equation}
     P' (|\psi_{\alpha}\rangle\otimes |\chi_{\alpha}\rangle) = P'_1 |\psi_{\alpha}\rangle\otimes P'_2|\chi_{\alpha}\rangle=|\psi'_{\alpha'}\rangle\otimes |\chi'_{\alpha'}\rangle,
     \label{eqn:Pprime}
 \end{equation}
 such that we are taken to a potentially different state on an alpha block with potentially a different eigenvalue. However, because both $P'_1, P'_2$ are tensor products of Pauli operators, they do not alter the entanglement of $\chi_{\alpha}$ across $A_2^{\alpha},\bar{A}_2^{\alpha}$, i.e., $S(\chi'_{\alpha'})=S(\chi_{\alpha})$ where we have defined $\chi_{\alpha}=\Tr_{\bar{A}_2}[|\chi_{\alpha}\rangle\langle\chi_{\alpha}|]$ and $\chi'_{\alpha'}=\Tr_{\bar{A}_2}[|\chi'_{\alpha'}\rangle\langle\chi'_{\alpha'}|]$.
As this holds for all of the above basis states, it follows that for any state $\tilde{\rho}\in L(\mathcal{C})$, $\Tr[\tilde{\rho}\mathcal{L}_A]=\sum_{\alpha}p_{\alpha}S(\chi_{\alpha})$ is conserved under action generated by such $\tilde{P}$'s. In other words,
$[\mathcal{L}_A,\tilde{P}] =0,$ which is a contradiction. Therefore the area operator must be trivial.
\end{proof}

The proof relies on two crucial elements --- that $Z(M_{\mathcal{C}})$ is trivial, as implied by the definition of the code,  and that Pauli operators are mapped to Pauli operators by Clifford unitaries. 
Note that $Z(M_{\mathcal{C}})$ can be trivial while $Z(M_A)$, the center of the code subalgebra recoverable on $A$, need not be. This is the case, for example, for the $[[4,1,2]]$ stabilizer code \cite{Cao:2020ksw}. 
Intuitively, 
because recovery unitaries $U_{A},U_{\bar{A}}$ are Cliffords in a stabilizer code, which map logical Pauli operators to Pauli operators, their resulting actions on $|\chi_{\alpha}\rangle$ do not change the entanglement across  $A_2^{\alpha}$ and $\bar{A}_2^{\alpha}$, which is what gives rise to the RT contribution in the entropy formula (\ref{eqn:RT}). Such conditions are generally broken by non-(Pauli)-stabilizer codes, where $M_{\mathcal{C}}$ can take on more general forms and the decoding unitaries need not map tensor product of operators to tensor product of operators. 

At the same time, we note that such a no-go result continues to hold for a more general class of codes as long as the requisite logical Pauli operator $\tilde{P}$ is mapped to an operator that preserve the entanglement across $A_2^{\alpha}$ and $\bar{A}_2^{\alpha}$, for which the tensor product of Pauli's is sufficient but not necessary. For example,

\begin{proposition}\label{prop:LU}
    Suppose a quantum code is unitary equivalent to a stabilizer code $\Pi_c$, \emph{i.e.}, $\Pi=Q\Pi_c Q^{\dagger}$ and $Q=Q_{A}\otimes Q_{\bar{A}}$, then it has a trivial area operator for the bipartition $A,\bar{A}$.
\end{proposition}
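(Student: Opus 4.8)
The plan is to reduce the claim directly to Theorem~\ref{thm:1} by showing that a factorized unitary $Q=Q_A\otimes Q_{\bar A}$ transports the \emph{entire} complementary-recovery data of the stabilizer code $\Pi_c$ to that of $\Pi$ without altering the residual entanglement across $A_2^\alpha$ and $\bar A_2^\alpha$, so that the two codes have area operators with identical spectra. Since recovery unitaries in the stabilizer case do not decode tensor products to tensor products after an arbitrary local deformation, I would not try to rerun the Pauli-transport argument inside the deformed code; instead I would use the factorization of $Q$ to match the $|\chi_\alpha\rangle$'s and then invoke Theorem~\ref{thm:1} as a black box.

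First I would identify the code subalgebras. Writing $M_A^{(c)}$ and $M_A^{(\Pi)}$ for the subalgebras supported on $A$ in the two codes, the identity $\Pi=Q\Pi_c Q^{\dagger}$ together with $Q^{\dagger}(\mathcal{O}_A\otimes I_{\bar A})Q=(Q_A^{\dagger}\mathcal{O}_A Q_A)\otimes I_{\bar A}$ gives $\Pi(\mathcal{O}_A\otimes I_{\bar A})\Pi=Q\,\Pi_c\big((Q_A^{\dagger}\mathcal{O}_A Q_A)\otimes I_{\bar A}\big)\Pi_c\,Q^{\dagger}$, hence $M_A^{(\Pi)}=QM_A^{(c)}Q^{\dagger}$ and likewise $M_{\bar A}^{(\Pi)}=QM_{\bar A}^{(c)}Q^{\dagger}$. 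So the Wedderburn block structure of $M_A^{(\Pi)}$ is the image under $Q$ of that of $M_A^{(c)}$, with the same label set $\{\alpha\}$ and the same block dimensions, and an induced orthonormal basis is $\{Q|\widetilde{\alpha,ij}\rangle_c\}$.

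Next I would push the complementary-recovery factorization of $\Pi_c$ through $Q$. By Lemma C.1 of \cite{Pollack:2021yij} there exist unitaries $U_A^{(c)},U_{\bar A}^{(c)}$, sub-factorizations $\mathcal{H}_A=\bigoplus_\alpha\mathcal{H}_{A_1^\alpha}\otimes\mathcal{H}_{A_2^\alpha}$, $\mathcal{H}_{\bar A}=\bigoplus_\alpha\mathcal{H}_{\bar A_1^\alpha}\otimes\mathcal{H}_{\bar A_2^\alpha}$, and states $|\chi_\alpha\rangle_{A_2^\alpha\bar A_2^\alpha}$ with $|\widetilde{\alpha,ij}\rangle_c=U_A^{(c)}U_{\bar A}^{(c)}\big(|\alpha,i\rangle_{A_1^\alpha}|\alpha,j\rangle_{\bar A_1^\alpha}|\chi_\alpha\rangle_{A_2^\alpha\bar A_2^\alpha}\big)$. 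Since $Q_A U_A^{(c)}$ is unitary on $A$, $Q_{\bar A}U_{\bar A}^{(c)}$ is unitary on $\bar A$, and the three factors act on disjoint tensor factors, one has $Q|\widetilde{\alpha,ij}\rangle_c=(Q_A U_A^{(c)})(Q_{\bar A}U_{\bar A}^{(c)})\big(|\alpha,i\rangle|\alpha,j\rangle|\chi_\alpha\rangle\big)$. Thus $\Pi$ is complementary recoverable with recovery unitaries $U_A:=Q_A U_A^{(c)}$, $U_{\bar A}:=Q_{\bar A}U_{\bar A}^{(c)}$, the \emph{same} sub-factorizations, and the \emph{same} residual states $|\chi_\alpha\rangle$. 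Consequently the area operator of $\Pi$ equals $\bigoplus_\alpha S(\chi_\alpha)\tilde I_{a^\alpha\bar a^\alpha}$ with $\chi_\alpha=\Tr_{\bar A_2^\alpha}|\chi_\alpha\rangle\langle\chi_\alpha|$, i.e. it has exactly the eigenvalues of the area operator of $\Pi_c$ for the same bipartition $A,\bar A$. By Theorem~\ref{thm:1} those eigenvalues are all equal, so $\mathcal{L}_A$ is proportional to the logical identity.

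The only real content is the covariance bookkeeping in the two middle steps. The point that needs slight care is that $Q_A$ and $Q_{\bar A}$ need not be Clifford, which is precisely why the argument proceeds by matching $|\chi_\alpha\rangle$'s and then appealing to Theorem~\ref{thm:1}, rather than by transporting Pauli operators directly. A mild subtlety worth recording is that the induced basis and the decomposition $\mathcal{H}_A=\bigoplus_\alpha\mathcal{H}_{A_1^\alpha}\otimes\mathcal{H}_{A_2^\alpha}$ are not unique, but any valid choice for $\Pi_c$ yields, after conjugation by $Q$, a valid choice for $\Pi$ with identical $S(\chi_\alpha)$, which suffices since $\mathcal{L}_A$ is itself basis-independent.
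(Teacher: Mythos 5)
Your proposal is correct, and it takes a recognizably different route from the paper's. The paper keeps the \emph{original} Clifford recovery unitaries and absorbs the deformation into them ($U'_A=U_AQ_A^{\dagger}$, $U'_{\bar A}=U_{\bar A}Q_{\bar A}^{\dagger}$), then literally reruns the internal argument of Theorem~\ref{thm:1}: the conjugated logical Pauli $\tilde P_Q=Q\tilde PQ^{\dagger}$ decodes to the same Pauli $P'$ as before, so no logical operator changes $S(\chi_\alpha)$ and the contradiction argument goes through inside the deformed code. You instead never touch the logical operators of $\Pi$: you show that $Q=Q_A\otimes Q_{\bar A}$ transports the whole complementary-recovery data (subalgebras $M_A^{(\Pi)}=QM_A^{(c)}Q^{\dagger}$, block labels, sub-factorizations, and crucially the residual states $|\chi_\alpha\rangle$) from $\Pi_c$ to $\Pi$, so the two area operators have identical spectra, and then you quote Theorem~\ref{thm:1} as a black box to conclude that those eigenvalues are all equal. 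The two proofs hinge on the same covariance fact --- that factorized $Q$ composes with the recovery maps without disturbing $|\chi_\alpha\rangle$ --- but your packaging isolates a cleaner intermediate statement (the area-operator spectrum of a complementary code is a local-unitary invariant), which is exactly the fact later exploited in the non-local-magic corollary; the paper's packaging, by re-deriving the Pauli-transport step, foreshadows the generalization in Theorem~\ref{thm:2}, where what matters is the decoded form of the logical operators rather than LU-invariance of the spectrum. Your closing remarks on the non-uniqueness of the Wedderburn basis and on $Q_A,Q_{\bar A}$ not being Clifford correctly flag the only points where care is needed, and neither creates a gap: $S(\chi_\alpha)$ is fixed by the code and the algebra via the entropy formula, so the conclusion is choice-independent.
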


\begin{proof}
With deformation $Q$, it is clear that the codewords and logical Pauli's are shifted such that $|\tilde{\psi}\rangle\rightarrow |\tilde{\psi}^Q\rangle=Q|\tilde{\psi}\rangle$ and $\tilde{P}\rightarrow \tilde{P}_Q=Q\tilde{P}Q^{\dagger}$.

Repeating the proof of Theorem~\ref{thm:1}, we simply need to show that no logical Pauli alters the entanglement of $|\chi_{\alpha}\rangle$.  From the definition of $\Pi$, we know that the code is still complementary with respect to the $A,\bar{A}$ bipartition such that $U'_{A}U'_{\bar{A}}|\tilde{\psi}_{\alpha}^Q\rangle =|\psi_{\alpha}\rangle\otimes |\chi_{\alpha}\rangle$ where one construction is simply $U'_{A}=U_AQ_A^{\dagger}, U'_{\bar{A}}=U_{\bar{A}}Q_{\bar{A}}^{\dagger}$ and $U_A,U_{\bar{A}}$ are Clifford unitaries. Note that the complementary recovery map generally does not take this nice form if $Q$ does not factorize. Therefore
\begin{equation}
    U'_AU'_{\bar{A}}\tilde{P}_Q|\tilde{\psi}^Q_{\alpha}\rangle = U_A U_{\bar{A}} Q^{\dagger} Q\tilde{P}Q^{\dagger} Q|\tilde{\psi}_{\alpha}\rangle = P'|\psi_{\alpha}\rangle|\chi_{\alpha}\rangle
\end{equation}
where $P'$ is again a Pauli operator that does not alter the entanglement of $|\chi_{\alpha}\rangle$ between $A_2$ and $\bar{A}_2$. 
\end{proof}

Let us now comment on the connection between area operator and non-stabilizerness. The non-local magic of a state is introduced by \cite{NLmagicHamma} and the definition is given below for convenience.  We refer interested readers to~\cite{NLmagicHamma} for more detailed definitions and discussions.
\begin{definition}
    The $m$-partite non-local magic of a state is given by 
    $$\mathcal{M}^{(m)}_{NL}(|\psi\rangle_{A_1\dots A_m}) = \min_{U=U_{A_1}\otimes \dots\otimes U_{A_m}}\mathcal{M}(U|\psi\rangle),$$
    where $\mathcal{M}(|\psi\rangle)\geq 0$ is a faithful measure of magic\footnote{The most general definition by \cite{NLmagicHamma} does not require the magic measure to be faithful, but all bounds and implementations used faithful measures.} such that $\mathcal{M}(|\psi\rangle)=0$ if and only if $|\psi\rangle$ is a stabilizer state.
\end{definition}

Intuitively, a magic measure $\mathcal{M}$ computes the ``distance'' between the state $|\psi\rangle$ and the nearest stabilizer state. The non-local magic takes it one step further by first removing any non-stabilizerness contained in the subsystems $A_1,\dots, A_m$ then computes the remaining magic. In other words, it quantifies the ``distance'' between the state $|\psi\rangle$ and the nearest stabilizer state up to any $m$-partite local unitary transformation $U$. In particular, the non-local magic vanishes if and only if the state is equal to a stabilizer state up to local unitary deformations. Therefore, in the same way,
\begin{definition}
    A code with trivial $m$-partite non-local magic over a partition $A_1,\dots, A_m$ is one that can be unitarily deformed to a stabilizer code for some $U=U_{A_1}\otimes \dots\otimes U_{A_m}$.
\end{definition}

\begin{corollary}
    Given an $m$-partition $\mathcal{K}=\{A_1,\dots, A_m\}$ of the physical qudits in an $[[n,k,d]]_q$ code, if the code has trivial $m$-partite non-local magic, then the area operator is trivial for any bipartition $A, \bar{A}$ where $A$ is the union of elements in any subset $\mathcal{K}_A\subset \mathcal{K}$. 
\end{corollary}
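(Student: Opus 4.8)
The plan is to reduce the corollary directly to Proposition~\ref{prop:LU}, so that the entire content becomes a short bookkeeping argument. By the preceding definition, a code with trivial $m$-partite non-local magic over the partition $\mathcal{K}=\{A_1,\dots,A_m\}$ has code projector $\Pi = U\Pi_c U^{\dagger}$ for some stabilizer code projector $\Pi_c$ and some fully factorized unitary $U = U_{A_1}\otimes\cdots\otimes U_{A_m}$. The only thing that genuinely needs checking is that, for the coarser bipartition $A = \bigcup_{i\in\mathcal{K}_A}A_i$ and $\bar A = \bigcup_{i\notin\mathcal{K}_A}A_i$, this $U$ still factorizes as $Q_A\otimes Q_{\bar A}$ — i.e. that the hypothesis of Proposition~\ref{prop:LU} survives the coarsening.

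First I would regroup the tensor factors: $\mathcal{H} = \mathcal{H}_{A_1}\otimes\cdots\otimes\mathcal{H}_{A_m}$ identifies with $\mathcal{H}_A\otimes\mathcal{H}_{\bar A}$ where $\mathcal{H}_A = \bigotimes_{i\in\mathcal{K}_A}\mathcal{H}_{A_i}$ and $\mathcal{H}_{\bar A} = \bigotimes_{i\notin\mathcal{K}_A}\mathcal{H}_{A_i}$. Since the $U_{A_i}$ act on pairwise disjoint sets of qudits, the product $\bigotimes_{i=1}^m U_{A_i}$ can be written as $Q_A\otimes Q_{\bar A}$ with $Q_A := \bigotimes_{i\in\mathcal{K}_A}U_{A_i}$ supported on $\mathcal{H}_A$ and $Q_{\bar A} := \bigotimes_{i\notin\mathcal{K}_A}U_{A_i}$ supported on $\mathcal{H}_{\bar A}$. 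I would also note that $\Pi_c$ remains a stabilizer code projector under this mere relabeling of Hilbert-space factors, since the underlying abelian Pauli stabilizer group is untouched. Hence $\Pi = Q\Pi_c Q^{\dagger}$ with $Q = Q_A\otimes Q_{\bar A}$ is exactly the setting of Proposition~\ref{prop:LU} for the bipartition $(A,\bar A)$, and that proposition yields a trivial area operator $\mathcal{L}_A$. As $\mathcal{K}_A\subset\mathcal{K}$ was arbitrary, the claim follows for every such bipartition.

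I do not anticipate a real obstacle here; the corollary is essentially Proposition~\ref{prop:LU} plus the definition of non-local magic. The one subtle point worth stating explicitly is the mismatch of granularity — "trivial $m$-partite non-local magic" is defined relative to the fine partition $\mathcal{K}$, whereas an area operator lives on a coarser bipartition — and the regrouping above is precisely what shows that a factorized deformation with respect to $\mathcal{K}$ is in particular a factorized deformation with respect to $(A,\bar A)$. If one wanted to be fully careful one might remark that this is why the corollary is stated for $A$ a union of \emph{blocks} of $\mathcal{K}$ rather than an arbitrary subsystem: an $A$ that splits some $A_i$ need not inherit a factorized deformation, so the hypothesis of Proposition~\ref{prop:LU} could fail in that case.
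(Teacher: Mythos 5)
Your proposal is correct and matches the paper's own argument: the paper likewise regroups the factorized deformation as $Q_A=\bigotimes_{i\in\mathcal{K}_A}U_i$ and $Q_{\bar{A}}=\bigotimes_{i\in\mathcal{K}\setminus\mathcal{K}_A}U_i$ and then invokes Proposition~\ref{prop:LU}. Your closing remark about why $A$ must be a union of blocks of $\mathcal{K}$ is a sensible clarification but not a departure from the paper's route.
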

\begin{proof}
   Trivial non-local magic implies the existence of local unitary deformation $\bigotimes_{i\in\mathcal{K}}U_i$ that does not entangle the partitions. We simply rewrite $Q_A=\bigotimes_{i\in \mathcal{K}_A} U_i$ and $Q_{\bar{A}}=\bigotimes_{i\in \mathcal{K}\setminus\mathcal{K}_A} U_i$. Then the conclusion follows from Proposition~\ref{prop:LU}.
\end{proof}

\begin{remark}
   These codes obtained by local unitary deformations are trivial extensions of the stabilizer code. They have local magic, but not non-local magic, \emph{i.e.}, non-stabilizerness that cannot be removed by acting ``local'' unitary gates. 
    Non-local magic is necessary for (holographic) codes with non-trivial area operators. It also plays a crucial role in generating non-flatness in the entanglement spectrum expected of a CFT\cite{NLmagicHamma}.
\end{remark}

However, to construct non-trivial area operators, it is insufficient to simply consider just any non-stabilizer code. Beyond these trivial extensions, there can also be more general codes that have trivial area operators but need not be stabilizer codes as long as certain product structures are preserved under the recovery process. 

\begin{theorem}\label{thm:2}
    For any quantum code $\mathcal{C}$ with an associated code subalgebra $M_{\mathcal{C}}$, let  $\mathcal{E}_{\mathcal{C}}$ be a nice error basis for the logical operators. Suppose $\mathcal{C}$ satisfies complementary recovery with respect to bipartition $\mathcal{H}=\mathcal{H}_A\otimes\mathcal{H}_{\bar{A}}$ and $Z(M_{\mathcal{C}})$ is trivial then it admits only a trivial area operator if all unitary logical operators $\tilde{E}\in\mathcal{E}_\mathcal{C}$ are mapped to 
$$U\tilde{E}U^{\dagger}=O_{A_1\bar{A}_1}\otimes O_{A_2}\otimes O_{\bar{A}_2},$$
where $U=U_AU_{\bar{A}}$ are unitary recovery maps, 
\end{theorem}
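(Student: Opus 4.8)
\noindent\emph{Proof outline.} The plan is to run the argument of Theorem~\ref{thm:1} essentially verbatim, with the logical Pauli basis replaced by the nice error basis $\mathcal{E}_{\mathcal{C}}$, and with the Clifford property of the recovery unitaries (which in Theorem~\ref{thm:1} guaranteed that logical Paulis map to tensor-product Paulis) replaced by the assumed factorization $U\tilde E U^{\dagger}=O_{A_1\bar A_1}\otimes O_{A_2}\otimes O_{\bar A_2}$. First I would invoke complementary recovery for the bipartition $A,\bar A$ to obtain the Wedderburn decomposition $\mathcal{C}=\bigoplus_\alpha\mathcal{H}_{a^\alpha}\otimes\mathcal{H}_{\bar a^\alpha}$, the induced factorizations of $\mathcal{H}_A$ and $\mathcal{H}_{\bar A}$, the recovery unitaries $U_A,U_{\bar A}$, the residual states $|\chi_\alpha\rangle$, and the area operator $\mathcal{L}_A=\bigoplus_\alpha S(\chi_\alpha)\tilde I_{a^\alpha\bar a^\alpha}\in Z(M_A)$. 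If $Z(M_A)$ is trivial there is nothing to prove; otherwise the task reduces to showing that the eigenvalues $S(\chi_\alpha)$ are all equal.

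Next, since $Z(M_{\mathcal{C}})$ is trivial by hypothesis and a nice error basis for the logical operators is in particular a complete operator basis for $M_{\mathcal{C}}$, there is a dichotomy: either $\mathcal{L}_A$ commutes with every $\tilde E\in\mathcal{E}_{\mathcal{C}}$, whence $\mathcal{L}_A\in Z(M_{\mathcal{C}})$ is already trivial, or there is some $\tilde E\in\mathcal{E}_{\mathcal{C}}$ with $[\tilde E,\mathcal{L}_A]\neq 0$, which I would fix and aim to contradict. Applying Lemma~\ref{lemma:2} to the nice error basis $\mathcal{E}_{\mathcal{C}}$ yields a basis element $\tilde Q\in\mathcal{E}_{\mathcal{C}}$ that can be simultaneously diagonalized with $\mathcal{L}_A$; I take a common eigenbasis $\{|\tilde\psi_\alpha\rangle\}$ of $\mathcal{C}$ with $\mathcal{L}_A|\tilde\psi_\alpha\rangle=S(\chi_\alpha)|\tilde\psi_\alpha\rangle$, each $|\tilde\psi_\alpha\rangle$ supported in a single $\alpha$-block so that $U_AU_{\bar A}|\tilde\psi_\alpha\rangle=|\psi_\alpha\rangle_{A_1^\alpha\bar A_1^\alpha}\otimes|\chi_\alpha\rangle_{A_2^\alpha\bar A_2^\alpha}$.

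The core step is the entanglement bookkeeping. By Lemma~\ref{lemma:1}, $\tilde E|\tilde\psi_\alpha\rangle$ is again an eigenstate of $\mathcal{L}_A$, say lying in the $\alpha'$-block, so $U_AU_{\bar A}\tilde E|\tilde\psi_\alpha\rangle=|\psi'_{\alpha'}\rangle\otimes|\chi'_{\alpha'}\rangle$; on the other hand $U_AU_{\bar A}\tilde E|\tilde\psi_\alpha\rangle=(U\tilde E U^{\dagger})\,U_AU_{\bar A}|\tilde\psi_\alpha\rangle=(O_{A_1\bar A_1}|\psi_\alpha\rangle)\otimes(O_{A_2}\otimes O_{\bar A_2})|\chi_\alpha\rangle$ by hypothesis, and since $O_{A_2}\otimes O_{\bar A_2}$ is a product operator across $A_2^\alpha$ and $\bar A_2^\alpha$ it leaves the entanglement entropy untouched, so $S(\chi_{\alpha'})=S(\chi'_{\alpha'})=S(\chi_\alpha)$. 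Hence $\tilde E$ carries every $\mathcal{L}_A$-eigenstate to an eigenstate of the same eigenvalue, which forces $[\mathcal{L}_A,\tilde E]=0$, contradicting the choice of $\tilde E$; therefore $\mathcal{L}_A$ is proportional to the identity. Equivalently, as in Theorem~\ref{thm:1}, $\Tr[\tilde\rho\,\mathcal{L}_A]=\sum_\alpha p_\alpha S(\chi_\alpha)$ is then conserved under the action of $\tilde E$ for every logical state $\tilde\rho$.

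The step I expect to require the most care is not any of the manipulations above but checking that the machinery transfers cleanly: that Lemmas~\ref{lemma:1} and \ref{lemma:2} were established for a general nice error basis rather than only for the Pauli basis (so that they may be applied to $\mathcal{E}_{\mathcal{C}}$), and that ``nice error basis for the logical operators'' genuinely spans $M_{\mathcal{C}}$ as an operator algebra, which is what licenses the dichotomy ``central, or non-commuting with some basis element.'' All of the holographic content is carried by the hypothesis $U\tilde E U^{\dagger}=O_{A_1\bar A_1}\otimes O_{A_2}\otimes O_{\bar A_2}$: the factorization between the $A_2^\alpha$ and $\bar A_2^\alpha$ legs is precisely the feature of tensor-product Paulis under Clifford recovery that Theorem~\ref{thm:1} exploited, so once it is assumed the RT contribution $S(\chi_\alpha)$ is forced to be block-independent.
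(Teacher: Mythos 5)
Your proposal is correct and follows essentially the same route as the paper, which simply reruns the Theorem~\ref{thm:1} argument with the logical Pauli basis replaced by $\mathcal{E}_{\mathcal{C}}$ and uses the assumed factorization $U\tilde{E}U^{\dagger}=O_{A_1\bar{A}_1}\otimes O_{A_2}\otimes O_{\bar{A}_2}$ to preserve the entanglement of $|\chi_\alpha\rangle$ across $A_2^{\alpha}$, $\bar{A}_2^{\alpha}$. Your cautionary check that Lemmas~\ref{lemma:1} and~\ref{lemma:2} are stated for general nice error bases, and that $\mathcal{E}_{\mathcal{C}}$ spans $M_{\mathcal{C}}$ so the dichotomy applies, is exactly what makes the transfer legitimate.
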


\begin{proof}
The proof is the same as that for the stabilizer code in Theorem~\ref{thm:1}, except we instead have $\tilde{P},\tilde{Q}_i$ in (\ref{eqn:Pprime}) be elements of the nice error basis. In the same way, $P'$s are no longer Pauli operators in general. By assumption, it maps $\tilde{P}$ to $O_{A_1\bar{A}_1}\otimes O_{A_2}\otimes O_{\bar{A}_2}$. This unitary operator does not entangle $A_2$ with $\bar{A}_2$. Hence the rest of the proof follows.  
\end{proof}

\begin{corollary}\label{coro:2}
If $\mathcal{C}$ is complementary, $Z(M_{\mathcal{C}})$ is trivial and the logical unitary operators in $\mathcal{E}_{\mathcal{C}}$ are transversal, then the code has trivial area operators for all bipartitions.
\end{corollary}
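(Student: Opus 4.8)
The plan is to show that, for any bipartition $\mathcal{H}=\mathcal{H}_A\otimes\mathcal{H}_{\bar A}$, the area operator $\mathcal{L}_A$ commutes with every element of the nice unitary basis $\mathcal{E}_{\mathcal{C}}$. Since $\mathcal{L}_A\in Z(M_A)\subseteq M_A\subseteq M_{\mathcal{C}}$, and $\mathcal{E}_{\mathcal{C}}$ is a basis for $M_{\mathcal{C}}$, this would place $\mathcal{L}_A$ in $Z(M_{\mathcal{C}})$, which is trivial by hypothesis, hence $\mathcal{L}_A\propto I$; as the bipartition is arbitrary, the area operator is then trivial for all of them. In spirit this is the argument of Theorem~\ref{thm:2}, but transversality lets us bypass the $\alpha$-block analysis: a transversal logical gate factorizes across the $A,\bar A$ bipartition, so it preserves \emph{both} terms of the RT relation (\ref{eqn:RT}), not merely the residual entanglement between $A_2^{\alpha}$ and $\bar A_2^{\alpha}$.

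First I would fix such a bipartition and pick a transversal logical unitary $\tilde E\in\mathcal{E}_{\mathcal{C}}$, implemented physically by some $\bigotimes_j E_j$ that restricts to $\tilde E$ on $\mathcal{C}$; writing $E_A=\bigotimes_{j\in A}E_j$ and $E_{\bar A}=\bigotimes_{j\in\bar A}E_j$, this implementation is the product $E_A\otimes E_{\bar A}$. The key structural claim I would establish is that such an operator normalizes the code subalgebra $M_A$: if $\tilde O\in M_A$ has a representative $O_A\otimes I_{\bar A}$ on the code space, then, using that $\tilde E^\dagger$ sends codewords to codewords, $\tilde E\tilde O\tilde E^\dagger$ acts on codewords as $(E_AO_AE_A^\dagger)\otimes I_{\bar A}$ and hence again lies in $M_A$; by symmetry $\tilde EM_A\tilde E^\dagger=M_A$, so conjugation by $\tilde E$ is an automorphism of $M_A$, and so of its center $Z(M_A)$.

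Second, I would feed this into the RT relation (\ref{eqn:RT}) --- valid because $\mathcal{C}$ is complementary recoverable, the relevant algebra being $M_A$ --- in the form $\Tr[\mathcal{L}_A\rho]=S(\Tr_{\bar A}[V\rho V^\dagger])-S(M_A;\rho)$ for every logical state $\rho$. Under $\rho\mapsto\tilde E\rho\tilde E^\dagger$ the encoded state transforms as $V\rho V^\dagger\mapsto(E_A\otimes E_{\bar A})V\rho V^\dagger(E_A\otimes E_{\bar A})^\dagger$, whose reduction onto $A$ is conjugated by the unitary $E_A$ and so has the same entropy; and since conjugation by $\tilde E$ is an automorphism of $M_A$, it leaves the algebraic entropy $S(M_A;\cdot)$ invariant. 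Therefore $\Tr[\mathcal{L}_A(\tilde E\rho\tilde E^\dagger)]=\Tr[\mathcal{L}_A\rho]$ for all $\rho$, i.e.\ $\tilde E^\dagger\mathcal{L}_A\tilde E=\mathcal{L}_A$, so $[\mathcal{L}_A,\tilde E]=0$; letting $\tilde E$ range over $\mathcal{E}_{\mathcal{C}}$ completes the argument.

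The step I expect to be the main obstacle is the first one --- rigorously showing that transversal logical operators normalize $M_A$ and therefore preserve $S(M_A;\cdot)$ --- since it requires care with the operator-algebraic formulation of complementary recovery: the transversal physical gate implements $\tilde E$ only on $\mathcal{C}$, and one must match the notion ``has a representative supported on $A$'' with membership in $M_A$. An alternative that sidesteps this is to reduce directly to Theorem~\ref{thm:2}: commuting $\tilde E=E_A\otimes E_{\bar A}$ past the reconstruction unitary $U=U_AU_{\bar A}$ of Theorem~\ref{thm:1} yields $U\tilde EU^\dagger=(U_AE_AU_A^\dagger)\otimes(U_{\bar A}E_{\bar A}U_{\bar A}^\dagger)$, which is still a product operator across $A,\bar A$ and so cannot alter the entanglement between $A_2^{\alpha}$ and $\bar A_2^{\alpha}$ --- precisely the property invoked in the proof of Theorem~\ref{thm:2}. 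The one routine check along the way is that a unitary basis spans $M_{\mathcal{C}}$, so that commuting with all of $\mathcal{E}_{\mathcal{C}}$ indeed forces $\mathcal{L}_A$ into $Z(M_{\mathcal{C}})$.
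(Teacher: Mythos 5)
Your proposal is correct, and your primary argument takes a genuinely different route from the paper. The paper's proof is exactly your fallback: it conjugates the transversal $\tilde E=E_A\otimes E_{\bar A}$ by the recovery unitaries to get a product operator across $A,\bar A$ and then invokes the mechanism of Theorem~\ref{thm:2} (and hence the $\alpha$-block/eigenbasis machinery of Theorem~\ref{thm:1} and Lemmas~\ref{lemma:1}--\ref{lemma:2}) to conclude the residual entanglement $S(\chi_\alpha)$ cannot change. Your main route instead shows that a transversal logical unitary normalizes $M_A$, hence preserves both $S(\Tr_{\bar A}[V\rho V^\dagger])$ and the algebraic entropy $S(M_A;\rho)$, so by the RT relation (\ref{eqn:RT}) one gets $\Tr[\mathcal{L}_A\,\tilde E\rho\tilde E^\dagger]=\Tr[\mathcal{L}_A\rho]$ for all $\rho$, i.e.\ $[\mathcal{L}_A,\tilde E]=0$; since $\mathcal{E}_{\mathcal{C}}$ spans $M_{\mathcal{C}}$ this places $\mathcal{L}_A$ in the trivial $Z(M_{\mathcal{C}})$. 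What your version buys: it never uses the ``nice'' commutation structure $EF=\omega(E,F)FE$ (only that the transversal unitaries span $M_{\mathcal{C}}$), it avoids re-running the eigenbasis argument, and it sidesteps a pedantic mismatch in the paper's one-liner, namely that a product $O_A\otimes O_{\bar A}$ is a weaker structure than the form $O_{A_1\bar A_1}\otimes O_{A_2}\otimes O_{\bar A_2}$ literally demanded in the statement of Theorem~\ref{thm:2} --- your entropy-invariance argument does not need that finer factorization at all, whereas the paper implicitly leans on the logical-operator property to bridge it. What the paper's approach buys is brevity and uniformity: the corollary becomes a two-line specialization of machinery already established, and it keeps all no-go statements flowing through the single entanglement-preservation criterion. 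The two steps you flagged as delicate (that $\tilde E M_A\tilde E^\dagger=M_A$ follows from $V\tilde E=(E_A\otimes E_{\bar A})V$ together with unitarity, and that algebraic entropy is invariant under the induced automorphism of the finite-dimensional algebra $M_A$) both go through, so no gap remains.
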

\begin{proof}
    If $\tilde{E}\in\mathcal{E}_{\mathcal{C}}$ is transversal, i.e., $\tilde{E}=\bigotimes_i O_i$, then for any bipartition $\mathcal{H}=\mathcal{H}_A\otimes \mathcal{H}_{\bar{A}}$, $U_{A}U_{\bar{A}} \tilde{E} (U_{A}U_{\bar{A}})^{\dagger} = O_{A}\otimes O_{\bar{A}}$. Since $\tilde{E}$ is also a logical operator, it satisfies the general form required by Theorem~\ref{thm:2} that does not entangle $A_2$ and $\bar{A}_2$, hence the area operator is trivial. 
\end{proof}
Codes covered by Corollary~\ref{coro:2} include some non-(Pauli-)stabilizer codes such as regular XP codes and XS codes \cite{XP,XS}; hence it is clear that there are non-stabilizer codes (that are not LU equivalent to stabilizer codes) which also have trivial area operators.  As transversal operators are generally desirable for fault tolerance, we see that the emergence of gravity is in tension with it. 
\begin{remark}
    In fact, for any code that encodes the tensor product of qubits, its code subalgebra is generated by the nice error basis such as the Pauli group elements, and hence would have trivial $Z(M_{\mathcal{C}})$. By the above Corollary, although these codes with transversal Pauli's (or other unitary gates) are generally desirable for fault-tolerant quantum computation, they are not so for emergent gravity as they produce trivial area operators.
\end{remark}

Another instance of non-stabilizer code with trivial area operator may be the (Haar) random tensor networks (RTN) in the large bond dimension limit\cite{Hayden:2016cfa} as they describe fixed area states\cite{Akers:2018fow,Dong:2018seb}. However, RTN models are approximate QECCs and said result holds in the infinite bond dimension limit. Therefore, strictly speaking they are beyond the scope of codes we discuss in this work. On the other hand, the subclass of random stabilizer tensor networks at finite bond dimensions that are powers of $q$ can be rephrased as stabilizer codes discussed in this work, and hence the no-go theorem still applies.

Interestingly, to produce non-trivial area operators, it appears that we need to consider criteria that are the opposite of what we usually use to construct fault-tolerant gates. Nevertheless, some level of fault-tolerance seems to be retained  in AdS/CFT \cite{threshold} despite this difficulty to incorporate transversal gates. See also \cite{Woods,Faist,Cree:2021rxi}. 

\begin{remark}\label{rmk:erasure}
Any code with a trivial $Z(M_{\mathcal{C}})$ but a non-trivial area operator for all bipartitions 
 $\{A,\bar{A}: |A|,|\bar{A}|\ne 0\}$ cannot have distance $d>1$.
\end{remark}
\begin{proof}
Since area operator is non-trivial, but $Z(M_{\mathcal{C}})$ is, there must exist some logical operator which takes states in one alpha block to another. Let this operator be $\tilde{O}$ and $[\mathcal{L}_A,\tilde{O}]\ne 0$. Recall that $\mathcal{L}_A\in Z(M_A)$, which has support on either $A$ or $\tilde{A}$ entirely, this means that $\tilde{O}$ must have support on both $A$ and $\bar{A}$. If this is true for all bipartitions, then there will always be a subalgebra generated by $\tilde{O}$ that is not recoverable after the erasure of $\bar{A}$. Hence the erasure correction cannot be exact. 
\end{proof}

Although the assumptions of Remark~\ref{rmk:erasure} are somewhat restrictive, the outcome is not terribly surprising in AdS/CFT because  any erasure correction is only expected to be approximate at finite $N$.
A similar, but less restrictive, distance upper bound is expected in order to reproduce non-trivial boundary 2-point functions\cite{HMERA}. Here the adversarial distance is not a particularly informative measure of the code property because large erasures may still be correctable up to small errors though the recovery is not exact.

It is a common lore that a non-trivial area operator is tied to soft-hair or edge modes such that the ``physical'' Hilbert space in a system with gauge symmetry is not factorizable unlike the logical subspace of a stabilizer code\cite{Buividovich:2008gq,Donnelly:2011hn,Casini:2013rba,Lin:2017uzr}. Such a system may be identified in a stabilizer code by considering a subspace of $\mathcal{C}$ that satisfies additional gauge constraints such that ``emergent gauge symmetry'' is present in a (low energy) subspace $C'$\cite{Levin:2004mi,Dolev:2021ofc,Qi:2022lbd}. Let us examine whether our no-go theorem puts any constraint on such codes.

For example, what if we do not consider proper stabilizer codes, but an induced version with a more  restricted form of the code subalgebra $\mathcal{N}_A\subset \mathcal{M}_A$? Such an operator algebra QECC generally need not be a stabilizer code. Ref. \cite{Pollack:2021yij} points out that complementary recovery is so restrictive that any von Neumann algebra $M_A$ satisfying such properties must be unique. This may be somewhat undesirable because although one may still identify one-sided area operators under $\mathcal{N}_A$ which may be non-trivial, the resulting code cannot satisfy complementary recovery. 

It is also possible to constrain the entire code subalgebra $\mathcal{N}\subset\mathcal{M}$, which amounts to (generalized) concatenation where one effectively defines another code over the $k$ logical qubits/qudits. The no-go theorem does not constrain codes obtained this way, except when the concatenated codes are stabilizer codes. It is known that (generalized) concatenation of stabilizer codes remain stabilizer codes\cite{QL}, hence such combination would yield trivial area operators. For example, by gauging the bulk and constructing a $\mathbb{Z}_2$ gauge theory over $k$ bulk qubits in any holographic stabilizer toy models would still yield trivial area operators. Therefore, while gauging would be useful in generating non-trivial centers, it alone is insufficient to guarantee the existence of non-trivial ``area operators''. This constitutes another reason for considering non-stabilizer codes.

\section{Codes with non-trivial area operators}
Although we are still far from understanding how codes with desirable non-trivial area operators can be systematically created from the bottom up, we can build up some intuition through examples. 
One such prescription for introducing a non-trivial area operator is shown in (Fig.~\ref{fig:circuit}). Each wire $A_1^{\alpha},\bar{A}_1^{\alpha},A_2^{\alpha}$ or $\bar{A}_2^{\alpha}$ in Fig.~\ref{fig:circuit}a is a tensor factor of a Hilbert space which need not have the same dimensions. However, distinct from the usual circuit diagrams, it is understood that these Hilbert spaces are also direct summed over $\alpha$. To distinguish from the usual single and double wires in circuit diagrams that represent tensor factors and classical operators, we use a fuzzy line to denote the implicit direct sum over different $\alpha$ sectors. Solid line on $U_A,U_{\bar{A}}$ denotes action on factorized Hilbert spaces as usual. Circle denotes controlled system. Controlled-controlled gate here is used to indicate the unitary action $$U = \sum_{\alpha}\sum_{i,j} |\alpha;ij\rangle\langle \alpha;ij|\otimes U_{\alpha},$$ which can depend on the $\alpha$-block on which the state of the wire lives. $U_{\alpha}$ acts on the subspace $\mathcal{H}_{A_2^{\alpha}\bar{A}_2^{\alpha}}$. Operationally, it is more convenient to construct such codes with a circuit over tensor factors like physical qubits; we may do so by introducing additional tensor factors to keep track of the $\alpha$ blocks and embedding the circuit into one over a factorizable Hilbert space (Fig.~\ref{fig:circuit}b). Each wire represents a tensor factor, but they need not be of the same dimensionality.

\begin{figure}
    \centering
    \includegraphics[width=0.55\linewidth]{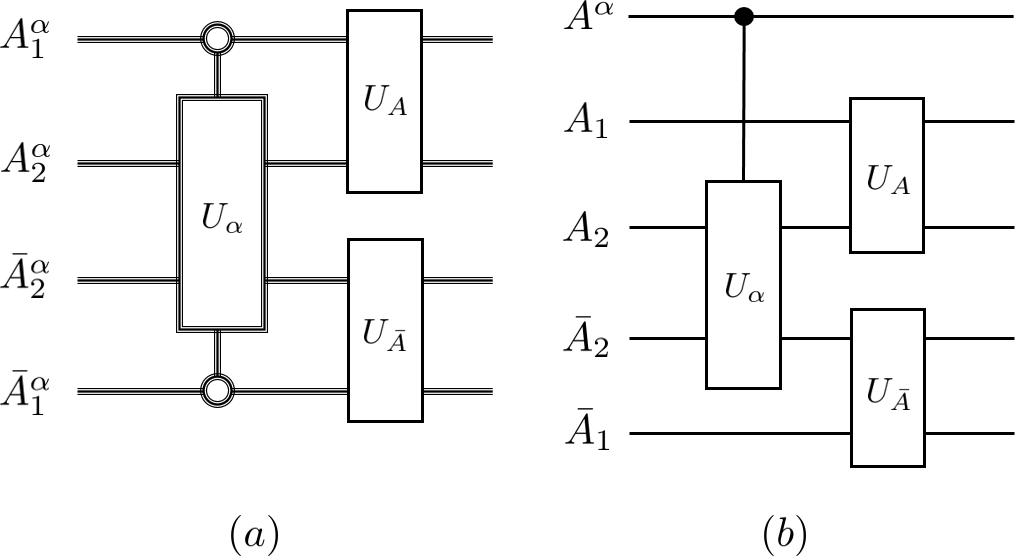}
    \caption{(a) A unitary circuit for preparing a code with non-trivial area operator in an $A$-$\bar{A}$ bipartition. (b) A circuit that achieves the left via embedding into a factorizable Hilbert space. The controlled gate is given by $\sum_{\alpha}|\alpha\rangle\langle\alpha|\otimes U_{\alpha}$. }
    \label{fig:circuit}
\end{figure}

For a more explicit example, consider the $[[4,1,1]]$ skewed code\cite{Cao:2020ksw} with trivial $A_1,\bar{A}_1$ defined by the encoding isometry $V=|0000\rangle\langle 0|+(|1001\rangle+|0110\rangle)\langle 1|/\sqrt{2}$. The overall encoding tensor is represented as Fig.~\ref{fig:4qubit}b. 
The codewords $|\tilde{0}\rangle, |\tilde{1}\rangle$ have entanglement structures that are invariant under rotations by construction. Using the protocol above, we can now construct this code with an explicit encoding circuit (Fig.~\ref{fig:4qubit}a). A simple way to verify the distance of the code is through the quantum weight enumerators\cite{SL}. The enumerators of this code are 
\begin{align}
    A(z)=&1+z+2z^2+z^3+3z^4,\\
    B(z)=&1+2z+10z^2+10z^3+9z^4.
\end{align}
Note that $B(z)-A(z)$ has non-zero coefficient at $z^1$, hence $d=1$.
It is also simple enough that we can verify the distance by observation: the Knill-Laflamme condition is violated for an error of weight 1, i.e., $\langle +|Z_1|-\rangle =1/2\not\propto\delta_{+-}=0.$

\begin{figure}
    \centering
    \includegraphics[width=0.55\linewidth]{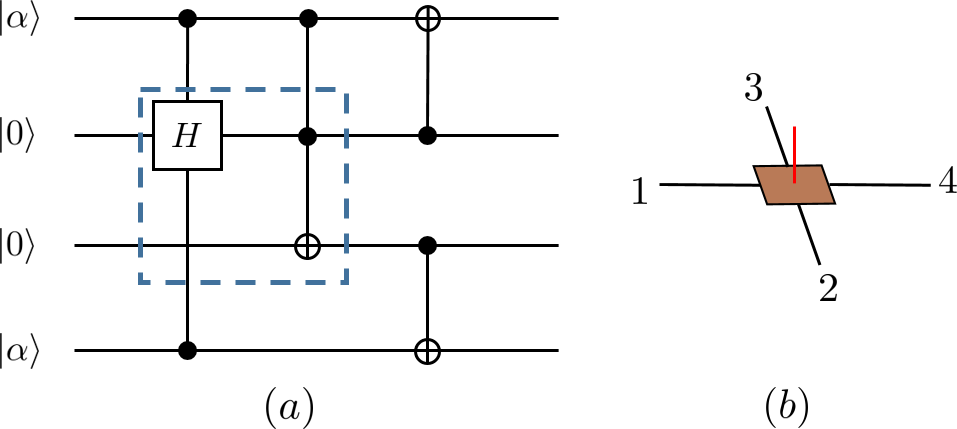}
    \caption{(a) Encoding circuit of the $[[4,1,1]]$ code where the qubits labelled sequentially from top to bottom. The dashed box represent $U_{\alpha}$ while the states $|\alpha\rangle$ correspond to the degrees of freedom in $A^{\alpha}$. (b) The same code represented as an encoding map/tensor with the red line marking the logical input and the black lines marking physical qubits.}
    \label{fig:4qubit}
\end{figure}

Note that the codewords are still stabilized by a Pauli subgroup $\langle ZIZI,IZIZ\rangle$, but other symmetries, which are not Pauli operators, exist. A representation of the logical X operator is
\begin{align}
    \tilde{X}=\frac{1}{\sqrt{8}}(IXIX-ZXIX+IXXI+ZXXI
    +XIIX+XZIX-XIXI+XZXI)
\end{align}
and one can easily read off some representations of $\tilde{Z}$ from the code words as weight-2 $ZZ$ operators, e.g., $Z_1Z_2$. 

As a result, this code is complementary in the 2 qubit by 2 qubit bipartition as long as each partition contains spatially adjacent qubits. One can see this by realizing that each subregion reconstructs the $\tilde{Z}$ subalgebra. It also has a non-trivial area operator ($S(\chi_{0})=0, S(\chi_{1})=1$)\footnote{Also see example G of \cite{Pollack:2021yij} for an atomic code with non-trivial $A_1,\bar{A}_1$. }. It is, however, not complementary recoverable in the 3-1 partition as the single qubit does not support a non-trivial logical operation while the remaining 3 qubits do not reconstruct the ${X}$-code subalgebra. In other words, $M_A:=V^{\dagger} L(\mathcal{H}_A)\otimes I_{\bar{A}} V$ is not a von Neumann algebra. However, the area operator can still be defined on one side.

Another key feature lies in the non-stabilizerness of the code. This is already apparent in the use of controlled-controlled gates, but us further examine it from the perspective of the Choi state of the encoding map

\begin{equation}|V\rangle = V(|00\rangle+|11\rangle)/\sqrt{2}= \frac{1}{\sqrt{2}}|00000\rangle + \frac 1 2 |10011\rangle + \frac 1 2 |01101\rangle.\end{equation}
We compute the robustness of magic (ROM) $\mathcal{R}(\rho)$ \cite{ROM}, which is a faithful measure of magic\footnote{For the sake of consistency, we will use an affine definition of ROM with a $-1$ offset such that $\mathcal{R}(\rho)=0$ for a stabilizer state, as opposed to 1 in some definitions.}. 
 All 1 or 2-site reduced density matrices of the Choi state are already diagonal in the computational basis, and hence are stabilizer states ($\mathcal{R}(\rho_{1,2})=0$). Some subsystems of 3 qubits have non-stabilizerness, though the majority of such subsystems are stabilizer states. The global state, however, has $\mathcal{R}(|V\rangle\langle V|)=0.7476$, hence the state is magical. Heuristically, therefore, magic is indeed distributed non-locally in the system.
This is in contrast with the Choi state of any stabilizer code encoding isometry, which has zero magic for any (sub)system.

To scale up beyond small atomic examples, one can also create more complicated tensor network encoding maps that inherit some non-trivial area operators by gluing these atomic ``legos'' with Bell fusion --- each connected edge in the tensor network is prepared by projecting two physical qubits of the respective seed codes into a Bell state. Some of the code properties can be deduced from a more refined version of operator pushing. However, the encoding maps generated by such tensor networks are generally non-isometric. Hence they are distinct from the conventional quantum codes and cannot always be unitarily encoded.

For instance, by gluing two $[[4,1,1]]$ codes, Fig.~\ref{fig:noniso}a represents an encoding map of a $[[6,2]]$ code. Without loss of generality, one can understand this map as the composition of the encoding isometry of the left code with another map $V'({\psi}):\mathbb{C}^{2}\rightarrow \mathbb{C}^8$ that is dual to the 4-qubit state $|\tilde{\psi}\rangle$ of the right code via the Choi-Jamiolkowski isomorphism. Therefore, when $|\tilde{\psi}\rangle =|\tilde{1}\rangle$ on the right logical qubit, we produce a concatenated code that inherits the non-trivial area operator of the left $[[4,1,1]]$ code. The only difference is coming from the code concatenation where a single physical qubit on the left code is isometrically mapped to 3 physical qubits on the right code. 

For any $|\tilde{\psi}\rangle\ne |\tilde{1}\rangle$, no single-qubit subsystem of the $[[4,1,1]]$ code is maximally mixed, so $V'$ is not an isometry. As a result, logical $X$ operators cannot be reconstructed exactly on the physical qubits as unitary operators. However, some of them can still be unitarily reconstructed in a state-dependent fashion\footnote{Although the encoding map in Fig.~\ref{fig:noniso}b doesn't have a non-trivial kernel like the non-isometric codes defined in \cite{AP2021,Akers:2022qdl}, it still stipulates state-dependent reconstructions for certain operators and causes bulk qubits to ``overlap''\cite{OQ,OQdS,Caonote} in that the commutator of spacelike separated observables is non-vanishing.}.  Explicitly, let $\tilde{X}_1$ be the unitary operator constructed via operator pushing when the logical qubit on the right is fixed to be $|\tilde{1}\rangle$. 
\begin{figure}
    \centering
    \includegraphics[width=0.5\linewidth]{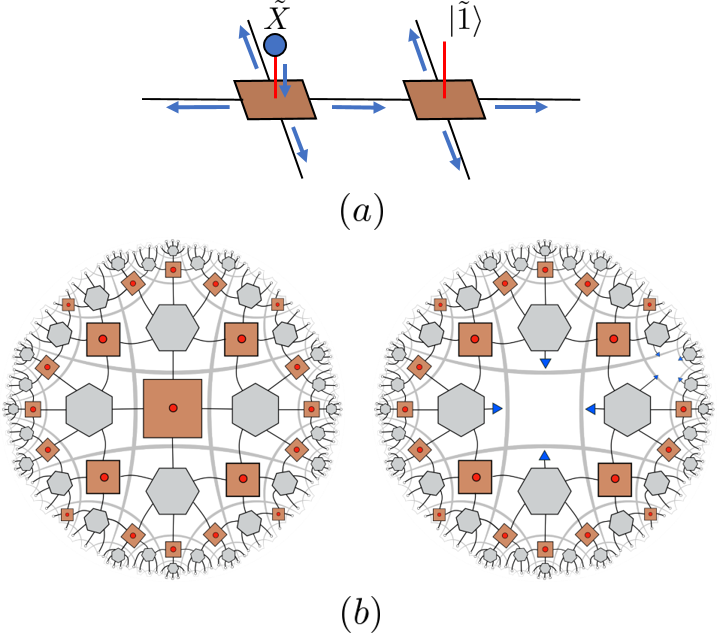}
    \caption{(a) Reconstruction of an operator $\tilde{X}_1$ via operator pushing when the second logical qubit is fixed to $|\tilde{1}\rangle$. (b) The holographic toy model using the $[[4,1,1]]$ atomic lego forms a skewed holographic Bacon-Shor code. Fixing logical legs to $|\tilde{0}\rangle$ alters the entanglement structure and the network connectivity by contracting the tensors with $|0\rangle$s (blue triangles). The grey hexagons are perfect tensors\cite{Pastawski:2015qua}.}
    \label{fig:noniso}
\end{figure}
A state-dependent representation of $\tilde{X}_1\otimes\tilde{I}_2$ for the $[[6,2,1]]$ code (Fig.~\ref{fig:noniso}a) can be written as
\begin{align*}
  \tilde{X}_1\otimes\tilde{I}_2=& \frac{1}{\sqrt{8}}(IXIXXX-ZXIXXX+IXXIII\\
  &+ZXXIII+XIIXXX+XZIXXX\\
  &-XIXIII+XZXIII).  
\end{align*}
It is easy to verify that it is a unitary operator such that $\tilde{X}_1\otimes\tilde{I}_2|\tilde{01}\rangle=|\tilde{11}\rangle$ but $\tilde{X}_1\otimes\tilde{I}_2|\tilde{00}\rangle\ne|\tilde{10}\rangle$. The overlap with the target state is $\langle \tilde{00}|\tilde{X}_1\otimes\tilde{I}_2|\tilde{10}\rangle=1/\sqrt{2}$.

The rotational symmetries in the $[[4,1,1]]$ code entanglement structure makes it a suitable component for holographic tensor networks. By gluing such atomic codes to a hyperbolic tiling\cite{QL}, the RT surface and area can now can take on a more geometric interpretation given by minimal cuts in the graph. For example, one can construct a more sophisticated model which is a non-stabilizer code but has non-trivial area operator. Fig.~\ref{fig:noniso}b is one such construction, which is an instance of the (gauge-fixed) skewed holographic Bacon-Shor code. For certain bipartitions of the boundary degrees of freedom, the code inherits non-trivial area operators from the $[[4,1,1]]$ legos and the entanglement structure as well as the shape of the minimal surface can depend on the bulk state. For example, if all bulk states in Fig.~\ref{fig:noniso}b except the central tile are fixed to $|\tilde{1}\rangle$, then it is clear that a bipartition whose minimal surface that cuts through the central tile has a non-trivial area operator. The same exercise can be repeated for other tiles. For this specific code, each bulk state that is fixed to $|\tilde{0}\rangle$ reduces the corresponding tensor to a product state, hence creating a puncture in the network in the order of AdS radius. For more specific properties see Section 6 of \cite{Cao:2020ksw}.

\section{Discussions}
Although stabilizer codes have wide applications in quantum error correction and have reproduced key features of holography, their limitations as toy models are also becoming increasingly apparent. It is known that stabilizer codes cannot reproduce certain features of holography, including the entanglement spectrum (non-flatness), multipartite entanglement\cite{tripartite,markovgap}, correlation functions \cite{Pastawski:2015qua,Jahn:2019nmz,Jahn2022tensornetworkmodels}, and the general expectation of CFTs to have non-local magic\footnote{Note that one can deviate from a stabilizer state on the boundary by encoding a non-stabilizer bulk state. However, the amount of non-stabilizerness injected this way is of $O(1)$ and is insufficient to capture certain holographic features, e.g. the Markov gap \cite{markovgap}, which require non-stabilizerness of $O(1/G_N)$. }. We showed that codes without non-local magic also cannot produce non-trivial area operators, and by extension, capture the features of back-reaction or the quantum extremal surface formula \cite{Engelhardt:2014gca}. 
It further motivates the need to understand holographic magic\cite{White:2020zoz,Goto}, its connection with entanglement, and the construction of a more systematic framework for designing non-stabilizer codes.

Given these developments, let us discuss three potential directions to better understand the emergence of gravity in quantum codes, such as the emergence of Einstein's equations (or the Hamiltonian constraint in the absence of dynamics). 
Here we roughly categorize them as 1) the extensions of the stabilizer formalism, 2) gauge constraints, and 3) approximate quantum error correction. 

Although stabilizer codes have many limitations, the general stabilizer formalism remains a powerful machinery \cite{XS,XP,CWS}.  For example, non-regular XP codes seems to support a non-trivial $Z(M_{\mathcal{C}})$. More precisely, it has a superselection sector associated with the core index. This may be connected with approaches where bulk gauge constraints are imposed \cite{Dolev:2021ofc}. Extensions in this front, including considering stabilizer groups with generators that are not given by local tensor products of operators, can be useful for both QECC and holography. Other formalisms, such as operator algebra QEC \cite{OAQEC1,OAQEC2,STABOAQEC}, are also crucial in building codes that circumvent the no-go theorem. Understanding non-trivial centers and area operators in codes with gauge constraints offers an important but under-explored alternative. This is the case for ordinary bottom-up QECC models of holography~\cite{Dolev:2021ofc}, which may be advantageous for near-term quantum simulations, but is also true of RTN approaches~\cite{Qi:2022lbd,Dong:2023kyr}, which are powerful analytical tools. 
Finally, it is well-accepted that AdS/CFT is best described by approximate QECCs. Therefore, definitions related to the area operator and the RT formula should be modified accordingly.  Ref.\cite{AP2021} made progress in this direction, though instances with non-trivial centers remain to be understood.

\section*{Acknowledgement}
The author thanks Alexander Jahn, Patrick Rall, Jason Pollack, and Yixu Wang for helpful comments and discussions. The author also thanks Ning Bao for (inadvertently) convincing him that it is better to let the note sit on arXiv than in his drawer collecting dust. C.C. acknowledges the support by the Air Force Office of Scientific Research (FA9550-19-1-0360), the National Science Foundation (PHY-1733907), and the Commonwealth Cyber Initiative. The Institute for Quantum Information and Matter is an NSF Physics Frontiers Center.

\appendix
\section{Proof of lemmas}\label{app:1}
\begin{lemma}\label{lemma:1}
If $E\in \mathcal{E}$ is an element of a nice error basis, then the action of any $F\in\mathcal{E}$ maps an eigenstate $|\alpha\rangle$ of $E$ to an eigenstate $|\alpha'\rangle$ of $E$ up to a phase. 
\end{lemma}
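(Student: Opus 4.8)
The plan is to read off the result directly from the defining commutation relation of a nice error basis together with unitarity. Since $\mathcal{H}$ is finite dimensional and $E$ is unitary, $E$ is diagonalizable and admits an orthonormal eigenbasis; fix one such eigenstate with $E|\alpha\rangle = \lambda|\alpha\rangle$, where $|\lambda| = 1$. For any $F \in \mathcal{E}$, Definition~\ref{def:nicebasis} supplies $EF = \omega(E,F)\, FE$ with $\omega(E,F)^r = 1$ for some $r$, so in particular $|\omega(E,F)| = 1$.

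The one computation that does all the work is
\begin{equation}
E\bigl(F|\alpha\rangle\bigr) = \omega(E,F)\, F E|\alpha\rangle = \omega(E,F)\,\lambda\,\bigl(F|\alpha\rangle\bigr).
\end{equation}
Because $F$ is unitary, $F|\alpha\rangle \neq 0$, so $F|\alpha\rangle$ is a genuine eigenvector of $E$ with eigenvalue $\omega(E,F)\lambda$, which again lies on the unit circle and is therefore a legitimate eigenvalue of the unitary $E$. Writing $F|\alpha\rangle = e^{i\phi}|\alpha'\rangle$ with $|\alpha'\rangle$ a unit vector in the $E$-eigenspace for eigenvalue $\omega(E,F)\lambda$ gives exactly the statement: $F$ carries the eigenstate $|\alpha\rangle$ of $E$ to an eigenstate $|\alpha'\rangle$ of $E$, up to a phase.

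I do not expect a genuine obstacle here; the content is entirely in the nice-error-basis relation and in the unitarity of $E$ and $F$. The only points deserving a (routine) remark are that $E$ possesses a full eigenbasis, which follows from finite dimensionality plus unitarity, and that multiplication of $\lambda$ by $\omega(E,F)$ keeps us among the eigenvalues of $E$, which follows from $\omega(E,F)$ being a root of unity. If desired, one can state the mild refinement actually used downstream in the proof of Theorem~\ref{thm:1}: $F$ permutes the eigenspaces of $E$ via $\lambda \mapsto \omega(E,F)\lambda$, so it maps the eigenbasis of $E$ to (a phased copy of) itself.
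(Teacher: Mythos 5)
Your proof is correct and takes essentially the same route as the paper: both arguments rest entirely on the nice-error-basis relation plus unitarity, and your one-line computation $E(F|\alpha\rangle)=\omega(E,F)\lambda\,F|\alpha\rangle$ is just a slightly more direct version of the paper's, which inserts $E^{\dagger}E$ and works with the commutation of $F$ past $E^{\dagger}$. The only extra content in the paper's proof is a remark (not needed for the lemma as stated) that $|\alpha\rangle$ and $F|\alpha\rangle$ are orthogonal whenever the eigenvalue actually changes.
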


\begin{proof}
Suppose $|\alpha\rangle$ be an eigenstate of $E$, such that $E|\alpha\rangle = \gamma |\alpha\rangle$. Since $E^{\dagger}E=I$, it is clear that $E^{\dagger}|\alpha\rangle = \gamma^{-1}|\alpha\rangle$. Now for any operator $F$, $FE=\omega(E,F) EF$, then 
$$F|\alpha\rangle=FE^{\dagger}E|\alpha\rangle=\gamma FE^{\dagger}|\alpha\rangle = \gamma \omega(F,E^{\dagger}) E^{\dagger}F|\alpha\rangle.$$ 
Hence $|\alpha'\rangle=F|\alpha\rangle$ is also an eigenstate of $E$ with eigenvalue $(\gamma\omega(F,E^{\dagger}))^{-1}$. 

Furthermore, it is clear that by inserting $E^{\dagger}E$, $$\langle\alpha|\alpha'\rangle=\langle\alpha|F|\alpha\rangle=\langle\alpha|E^{\dagger}EF|\alpha\rangle = |\gamma|^2 \omega(E,F)\langle\alpha|F|\alpha\rangle$$ which implies that the two states are orthogonal if $|\gamma|^2\omega(E,F)\ne 1$. 
\end{proof}

\begin{lemma}\label{lemma:2}
 Suppose $\mathcal{L}_A$ is a non-trivial area operator in a code that is complementary recoverable in the partition $A,\bar{A}$ and $\mathcal{E}=\{Q_i\}\subset M_C$ is a nice error basis, then there exists a physical representation $\tilde{Q}_i$ such that $[\tilde{Q}_i,\mathcal{L}_A]=0$.
\end{lemma}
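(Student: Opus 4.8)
The plan is to read off the commuting representation directly from the algebraic position of $\mathcal{L}_A$ relative to the two wedge algebras. First I would fix the Wedderburn decomposition $\mathcal{C}=\bigoplus_{\alpha}\mathcal{H}_{a^{\alpha}}\otimes\mathcal{H}_{\bar a^{\alpha}}$ induced by $M_A$ and record the two facts that drive the proof. Complementary recovery gives $M_A'=M_{\bar A}$, hence $Z(M_A)=M_A\cap M_{\bar A}=Z(M_{\bar A})$ and $M_A\vee M_{\bar A}=Z(M_A)'$; and by definition $\mathcal{L}_A\in Z(M_A)$, so $\mathcal{L}_A$ is central in both wedge algebras and therefore commutes with \emph{every} block-diagonal logical operator. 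The $\alpha$-blocks are exactly the joint eigenspaces of the abelian algebra $Z(M_A)$, and since $\mathcal{E}$ is a complete operator basis for $M_{\mathcal{C}}$ I can take the generators of $Z(M_A)$ to be a commuting subfamily $\{C_j\}\subseteq\mathcal{E}$ (their mutual phases satisfy $\omega(C_j,C_{j'})=1$); their common eigenbasis is the block-separating basis $\{|\tilde\psi_{\alpha}\rangle\}$ that the downstream argument in Theorem~\ref{thm:1} requires.

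The core of the argument is a dichotomy for each $Q_i\in\mathcal{E}$ governed entirely by the nice-error-basis phases $\omega(Q_i,C_j)$ against the central generators. When $\omega(Q_i,C_j)=1$ for every $j$, Lemma~\ref{lemma:1} shows that $Q_i$ preserves each joint eigenspace of $\{C_j\}$, so $Q_i$ is block diagonal, i.e. $Q_i\in M_A\vee M_{\bar A}=Z(M_A)'$. Because $\mathcal{L}_A$ is central in both $M_A$ and $M_{\bar A}$ it commutes with every element of the algebra they generate, so $[Q_i,\mathcal{L}_A]=0$ on the nose and the element itself is the required representation. This disposes of the entire center-commuting subfamily at once, and applied to the $C_j$ it furnishes the simultaneous diagonalization of $\mathcal{L}_A$ with a nice-error-basis element that Theorem~\ref{thm:1} consumes.

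The hard part will be the complementary case, an element $Q_i$ with $\omega(Q_i,C_j)\ne 1$ for some $j$. Lemma~\ref{lemma:1} still applies, so $Q_i$ sends each central eigenstate to a central eigenstate and hence acts as a \emph{permutation} of the $\alpha$-blocks rather than mixing them; but now $Q_i\notin M_A\vee M_{\bar A}$, the clean block-diagonal argument is unavailable, and the only representation freedom --- multiplying by operators that restrict to the identity on $\mathcal{C}$ --- cannot move $Q_i$ into a single wedge. For such an element $[\tilde Q_i,\mathcal{L}_A]=0$ holds precisely when every pair of blocks exchanged by $Q_i$ carries the same eigenvalue $S(\chi_{\alpha})$, so the universal statement reduces to showing that the block-permutation orbit of each $Q_i$ lies within a single level set of $\mathcal{L}_A$. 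The route I would pursue is to push $Q_i$ through the factorization $|\widetilde{\alpha,ij}\rangle=U_AU_{\bar A}\bigl(|\alpha,i\rangle|\alpha,j\rangle|\chi_{\alpha}\rangle\bigr)$ and use that the phases $\omega(Q_i,C_j)$ are roots of unity, together with the orthogonality relation $|\gamma|^2\omega(E,F)=1$ from Lemma~\ref{lemma:1}, to force the permutation to act on the residual states $|\chi_{\alpha}\rangle$ by an entanglement-preserving map. Establishing this last matching --- that the permutation a center-violating $Q_i$ induces can only connect blocks of equal residual entropy --- is the delicate crux of the lemma and the step where I expect the full strength of the nice-error-basis axioms to be indispensable.
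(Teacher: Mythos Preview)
Your proposal has two genuine problems, one of assumption and one of target.

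\textbf{You beg the question.} In your first paragraph you assert that ``since $\mathcal{E}$ is a complete operator basis for $M_{\mathcal{C}}$ I can take the generators of $Z(M_A)$ to be a commuting subfamily $\{C_j\}\subseteq\mathcal{E}$.'' That $\mathcal{E}$ spans $M_{\mathcal{C}}$ linearly does \emph{not} imply that the subalgebra $Z(M_A)$ is spanned by the elements of $\mathcal{E}$ it happens to contain; a priori a central element could be a linear combination of non-central $Q_i$'s. The existence of even a single non-identity $C\in\mathcal{E}\cap Z(M_A)$ is exactly what the lemma asserts, and it is what the paper actually proves: one expands $\mathcal{L}_A=c_0\tilde I+\sum_i c_i\tilde Q_i$, picks any $\tilde Q_i$ with $c_i\neq 0$, supposes it lies outside $Z(M_A)$, finds a $\tilde P\in M_A$ (or $M_{\bar A}$) not commuting with it, and then uses $[\tilde P,\mathcal{L}_A]=0$ together with the nice-error-basis relations and linear independence of $\{\tilde Q_j\tilde P\}$ to reach a contradiction. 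Your $\{C_j\}$ are the \emph{output} of this argument, not an input you may assume.

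\textbf{You are proving the wrong statement.} The lemma is existential: it asks for \emph{one} non-identity $\tilde Q_i$ commuting with $\mathcal{L}_A$ (so that in Theorem~\ref{thm:1} one can simultaneously diagonalize them). Your third paragraph instead tries to show that \emph{every} $Q_i\in\mathcal{E}$, including those with $\omega(Q_i,C_j)\neq 1$, admits a representation commuting with $\mathcal{L}_A$. This is not only unnecessary, it is impossible under the hypotheses: if every element of $\mathcal{E}$ commuted with $\mathcal{L}_A$ then $\mathcal{L}_A\in Z(M_{\mathcal{C}})$, forcing $\mathcal{L}_A$ to be trivial and contradicting the assumption. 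The ``permutation of $\alpha$-blocks must preserve $S(\chi_\alpha)$'' step you flag as the crux is therefore not a gap to be closed but a statement that is generically false --- indeed, in Theorem~\ref{thm:1} one \emph{uses} a $\tilde P$ that does not commute with $\mathcal{L}_A$ and shows separately, via the Clifford/tensor-product structure, that it nonetheless preserves the entropies. That entropy-preservation argument belongs to Theorem~\ref{thm:1}, not to this lemma.
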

\begin{proof}
    Let us first decompose the non-trivial area operator as
 \begin{equation}
     \mathcal{L}_A = c_0 \tilde{I}+\sum_i c_i \tilde{Q}_i,
 \end{equation}
 where there is at least one $\tilde{Q}_i$ for which $c_i\ne 0$. Suppose $\tilde{Q}_i\not\in Z(M_A)$, then $\tilde{Q}_i$ is in $M_A'$ but not contained in $M_A$ or there must be some $\tilde{P}\in M_A$ that doesn't commute with it. For the former, there must be $\tilde{P}'\in M_{\bar{A}}$ that does not commute with it, for which the following argument can be repeated for $A\rightarrow \bar{A}$. For the latter, we have
 $$0=[\tilde{P},\mathcal{L}_A] = \sum_i c_i [\tilde P,\tilde Q_i]= \sum_j d_j [\tilde P,\tilde Q_j]$$
where we have eliminated any $\tilde Q_i$ that commutes with $\tilde P$ from the first sum. Since $PQ_j=\omega(P,Q_j) Q_jP$ where $\omega\ne 1$ by assumption, 
$$\sum_j d_j [\tilde P,\tilde Q_j] = \sum_j d_j (\omega(P,Q_j)-1)\tilde Q_j \tilde P = (\sum_j\alpha_j \tilde Q_j) \tilde P=0.$$
Because the nice error basis forms a group under multiplication, let's write ${R}_j= Q_j  P\in\mathcal{E}$, then it means that $ M=\sum_j\alpha_j  R_j=0$ for $\alpha_j \ne 0$. As $ M$ is null, $\Tr[R^{\dagger}_j M]=0$ for any $R_j\in \mathcal{E}$. However, this contradicts $\Tr[ R_j^{\dagger} M]=\alpha_j\Tr[I]\ne 0$. Therefore there must exist at least one $\tilde{Q}_i\in Z(M_A)$ meaning that $[\tilde{Q}_i,\mathcal{L}_A]=0$. 
\end{proof}

\bibliographystyle{jhep}
\bibliography{ref} 
\end{document}